\newcommand{\E}{\mathbf{E}}
\renewcommand{\P}{\mathbf{P}}
\renewcommand {\epsilon}{\varepsilon}
\newtheorem{prop}{Proposition}[section]
\DeclareMathSymbol{\ophi}{\mathalpha}{letters}{"1E}
\newcommand{\e}{\varepsilon}
\newcommand{\la}{\lambda}
\renewcommand{\phi}{\varphi}
\newcommand{\be}{\begin{equation}}
\newcommand{\ee}{\end{equation}}
\newcommand{\ben}{\begin{equation*}}
\newcommand{\een}{\end{equation*}}
\newcommand{\ba}{\begin{equation}\begin{aligned}}
\newcommand{\ea}{\end{aligned}\end{equation}}
\newenvironment{proof}{\par\noindent{\bf Proof:}}{\hfill$\blacksquare$\par}
\newfont{\cyrfnt}{wncyr10}
\def\J3{\cyrfnt{\rm \u{\cyrfnt I}}}
\def\j3{\cyrfnt{\rm \u{\cyrfnt i}}}
\begin{document}
\title{One--dimensional space--discrete transport\\ subject to L\'evy perturbations 
}

\date{\today}


\author{Ilya Pavlyukevich \\ Institut f\"ur Mathematik,
Humboldt--Universit\"at zu Berlin, \\ Rudower Chaussee 25, 12489 Berlin, Germany \\
E--mail: pavljuke@math.hu--berlin.de\\
\\
Igor M.\ Sokolov \\ Institut f\"ur Physik, Humboldt--Universit\"at
zu Berlin, \\ Newtonstra\ss e 15, 12489 Berlin, Germany\\
E--mail: igor.sokolov@physik.hu--berlin.de}

\maketitle

\begin{abstract}
In this paper we study a one-dimensional space-discrete transport equation subject
to additive L\'evy forcing. The explicit form of the solutions
allows their analytic study. In particular we discuss the invariance of the covariance
structure of the stationary distribution for L\'evy perturbations with finite second
moment. The situation of more general L\'evy perturbations lacking
the second moment is considered as well. We moreover show that some of the properties
of the solutions are pertinent to a discrete system and are not reproduced by
its continuous analogue.
\end{abstract}

\textbf{Keywords:} transport equation; L\'evy process; L\'evy flights; Bessel function; stationary distribution.

\section{Introduction}

In what follows we discuss the system of coupled linear stochastic differential 
equations related to the one considered in Ref. \cite{MattinglySVE-07},
which after time-integration reads
\ba
\label{eq:m1}
\begin{cases}
a_0^\nu(t;s)=0,\\ 
\displaystyle
a_1^\nu(t;s)=a_1(s)+\int_s^t(-a^\nu_2(u;s)-\nu a^\nu_1(u;s)) \, du+ L_s(t),\\
\displaystyle
a_n^\nu(t;s)=a_n(s)+\int_s^t(a^\nu_{n-1}(u;s)-a^\nu_{n+1}(u;s)-\nu a^\nu_n(u;s))\, du,
\quad n\geq 2,\quad t\geq s,\\
\end{cases}
\ea
with $\nu\geq 0$ and for $-\infty<s\leq t<\infty$.
The discussion in \cite{MattinglySVE-07} was motivated by a description 
of hydrodynamic turbulence within a framework of simplified shell models, see
\cite{MattinglySVE-07a}.
The system \eqref{eq:m1} 
is a space-discrete transport equation with a Dirichlet boundary condition and 
a singular random perturbation at the position $n=1$.  The
interpretation of the equations corresponds to the situation when the energy
is pumped into the system at shell $n=1$ via the effective noise (or signal) term 
$L_s(t)$, is transported between the shells and dissipated due to the
effective viscosity $\nu\geq 0$. 

Let $L=(L(t))_{t\geq 0}$ be a one dimensional L\'evy process, i.e.\ a 
stochastically continuous process with stationary independent increments starting
at the origin and having right-continuous sample paths with left limits, and
\be
L_s(t):=L(t-s)-L(s),\quad -\infty<s\leq t,
\ee
be a time-shifted L\'evy process defined on the half-line $[s,+\infty)$.
It is well known that a L\'evy process $L$ is completely characterised by the 
Fourier transform of its marginal distributions which has the following form
due to the L\'evy--Hinchin formula:
\ba
\E e^{i\lambda L(t)}&=e^{t\Psi(\la)},\quad \la\in\mathbb{R},t\geq 0,\\
\Psi(\la)&=-\sigma^2\frac{\la^2}{2}+i\la\mu-\int_{\mathbb{R}\backslash\{0\}}
\Big(e^{i\la y}-1-i\la \frac{y}{1+y^2}\Big)\, \rho(dy).
\ea 
The function $\Psi(\cdot)$ is called the cumulant of $L$.
The first summand of $\Psi$ corresponds to a Brownian motion with variance $\sigma$, $\sigma\geq 0$,
the second summand determines the linear drift $\mu t$, $\mu\in\mathbb{R}$,
and the third summand corresponds to a pure jump L\'evy processes whose 
jump intensity and sizes are governed by the jump measure $\rho$ satisfying the integrability condition 
$\int_{\mathbb{R}\backslash\{0\}} (y^2\wedge 1)\,\rho(dy)<\infty$. 

In particular, if $\rho\equiv 0$, the L\'evy process $L$ is just a Brownian 
motion with drift. The case of the standard Brownian motion without drift, 
corresponding to the cumulant $\Psi(\la)=-\la^2/2$, 
is exactly the one discussed in Ref.~\cite{MattinglySVE-07}. 
Further, if $\rho(\mathbb{R}\backslash\{0\})=c\in(0,\infty)$, and 
$\sigma=\mu=0$, then 
$L$ is a compound Poisson process with intensity $c$ and jump sizes distributed with the 
probability law $\rho(\cdot)/c$. Finally, the case $\sigma=\mu=0$ 
and $\rho(dy)=c(\alpha)|y|^{-1-\alpha}dy$, $\alpha\in (0,2)$, 
corresponds to L\'evy flights with $\Psi(\lambda)=-|\lambda|^\alpha$.
We address the reader to the books 
\cite{Sato-99,Applebaum-04} for more information on L\'evy processes.

The goal of this paper consists in studying the long time behaviour of solutions of the 
system \eqref{eq:m1}. Our approach here differs from the one of \cite{MattinglySVE-07}
and is based on the explicit solution of \eqref{eq:m1}, which
allows a generalisation of the results of \cite{MattinglySVE-07}.
We moreover discuss some peculiar properties of the continuous limit of
\eqref{eq:m1} described by a partial differential equation. The
behaviour of the solutions of this continuous equation turns out to be
different from the one of its discrete analogue, Eq.(1).

\section{Explicit solution}

Let us first derive the explicit form of the solution of \eqref{eq:m1}.

\begin{prop}
The solution of \eqref{eq:m1} is given by
\ba
a_0^\nu(t;s)&=0,\\
a_n^\nu(t;s)&= e^{-\nu(t-s)}\sum_{m=1}^\infty 
a_m(s) \Big[J_{|n-m|}(2(t-s))+(-1)^{m-1} J_{n+m}(2(t-s))\Big]\\
&+\int_s^t H_n^\nu(t-r)\, dL_s(r),\quad n\geq 1,t\geq s,\\
H^\nu_n(r)&:=n\frac{J_n(2r)}{r}e^{-\nu r},
\ea 
$J_n(\cdot)$ being the Bessel functions of the first kind.
\end{prop}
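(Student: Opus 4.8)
The plan is to \emph{verify} the claimed formula by direct substitution rather than rederive it from scratch, using the linearity of \eqref{eq:m1} to split the solution into a homogeneous part that carries the initial data $a_m(s)$ and a stochastic convolution that carries the forcing $L_s$. First I would strip off the dissipation by the gauge transformation $b_n(t):=e^{\nu(t-s)}a_n^\nu(t;s)$, which turns the deterministic recursion into the driftless transport system $\dot b_n=b_{n-1}-b_{n+1}$ for $n\ge 1$, subject to the Dirichlet condition $b_0\equiv 0$ that is already encoded in the $n=1$ equation (where the $a_0$ term is simply absent). All the work then reduces to understanding the fundamental solution of this single tridiagonal operator.

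For the homogeneous part the key analytic input is the Bessel recurrence $\frac{d}{d\tau}J_n(2\tau)=J_{n-1}(2\tau)-J_{n+1}(2\tau)$ together with $J_k(0)=\delta_{k0}$. These two facts show at once that $\tau\mapsto J_{n-m}(2\tau)$ is the whole-line fundamental solution: it obeys the recursion and has the correct initial value $\delta_{nm}$. To enforce $b_0\equiv 0$ I would add an image source reflected through $n=0$; using $J_{-k}=(-1)^kJ_k$ one checks that the coefficient $(-1)^{m-1}$ multiplying $J_{n+m}(2\tau)$ is precisely the one that makes the bracket vanish at $n=0$ while leaving the initial value $\delta_{nm}$ intact for $n,m\ge 1$. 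Reinstating the factor $e^{-\nu(t-s)}$ and superposing over $m$ with the weights $a_m(s)$ then reproduces the first line of the asserted formula; the boundary check at $n=1$ is the conceptual heart of this step.

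For the stochastic part I would invoke Duhamel's principle: the forcing enters only the $n=1$ equation, so the response kernel is the $m=1$ column of the Green's function, namely $J_{n-1}(2\tau)+J_{n+1}(2\tau)$. The identity $J_{n-1}(z)+J_{n+1}(z)=\frac{2n}{z}J_n(z)$ collapses this to $\frac{n}{\tau}J_n(2\tau)$, which after the dissipation factor is exactly $H_n^\nu(\tau)$. Two properties make the convolution close the system: the small-argument asymptotics $J_n(z)\sim(z/2)^n/n!$ give $H_n^\nu(0)=\delta_{n1}$, so the noise reappears only at $n=1$ and nowhere else, and the kernel inherits $\dot H_n^\nu=H_{n-1}^\nu-H_{n+1}^\nu-\nu H_n^\nu$ with $H_0^\nu\equiv 0$. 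Rather than differentiate the non-smooth $L_s$, I would verify the \emph{integral} equation directly: substituting $p_n(t):=\int_s^t H_n^\nu(t-r)\,dL_s(r)$ into the right-hand side of \eqref{eq:m1} and applying a stochastic Fubini theorem to exchange the $du$ integral with the $dL_s(r)$ integral, the inner integral $\int_r^t \dot H_n^\nu(u-r)\,du=H_n^\nu(t-r)-H_n^\nu(0)$ telescopes and returns $p_n(t)$ for $n\ge 2$ and $p_1(t)-L_s(t)$ for $n=1$, which is exactly what the two cases of the system demand.

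The hard part will be the analytic bookkeeping rather than any single identity. I expect the main obstacles to be: justifying the interchange of the infinite sum over $m$ with differentiation and with the stochastic integral, for which the uniform bound $|J_n|\le 1$ and the decay of $J_n(2\tau)$ in $n$ are needed to control convergence; and verifying the hypotheses of the stochastic Fubini theorem for the L\'evy integrator $L_s$, which requires appropriate integrability of the kernel $H_n^\nu$ against the L\'evy characteristics (square-integrability in the finite-variance case, and the suitable $p$-integrability when the second moment is absent). Once these measurability and integrability conditions are in place, the algebraic core—the two Bessel recurrences and the reflection coefficient—delivers the result.
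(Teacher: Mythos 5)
Your proposal is correct and its skeleton is the same as the paper's: split the solution by linearity into a homogeneous part and a stochastic convolution, verify the homogeneous part with $2J_n'=J_{n-1}-J_{n+1}$ and $J_k(0)=\delta_{k0}$, and close the stochastic part using $\frac{d}{dr}H_n^\nu=H_{n-1}^\nu-H_{n+1}^\nu-\nu H_n^\nu$, $H_n^\nu(0)=\delta_{n1}$, and an interchange of the $du$ and $dL_s$ integrals (your stochastic--Fubini telescoping is exactly the paper's ``integration by parts'' step). There are, however, two points where you genuinely add something. First, the paper simply posits the kernel $H_n^\nu$ and verifies its ODE by computation; your Duhamel argument identifies $H_n^\nu$ as the $m=1$ column of the Dirichlet Green's function, collapsed via $J_{n-1}(z)+J_{n+1}(z)=\frac{2n}{z}J_n(z)$, which explains where the kernel comes from. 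Second, and more substantively: your method of images yields the bracket $J_{n-m}(2(t-s))+(-1)^{m-1}J_{n+m}(2(t-s))$ with \emph{signed} order $n-m$, whereas the Proposition (and the paper's own proof) write $J_{|n-m|}$. These differ by a sign whenever $n<m$ and $m-n$ is odd, and your version is the correct one: only the signed bracket vanishes at $n=0$ for every $m$ (for odd $m$ the absolute-value bracket gives $2J_m\neq 0$), and a direct check at $n=1$, $m=2$, $\nu=0$ is decisive: the equation $\dot a_1=-a_2$ with $a_2(0)=1$ forces $a_1(t)\approx -t$ for small $t$, which matches $J_{-1}(2t)-J_3(2t)=-J_1(2t)-J_3(2t)$ but not $J_{|1-2|}(2t)-J_3(2t)=J_1(2t)-J_3(2t)$. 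So your argument, carried out, proves the corrected formula rather than the literal statement --- which is the right outcome, since the literal formula does not solve \eqref{eq:m1}. (The slip is harmless for the remainder of the paper, because the stochastic kernel only involves the column $m=1$ with $n\geq m$, where the two brackets coincide.) The technical caveats you flag --- hypotheses for stochastic Fubini with a L\'evy integrator, and interchanging the infinite sum over $m$ with differentiation --- are left implicit in the paper as well; on a bounded time interval the smoothness and boundedness of $H_n^\nu$, which the paper extracts from $J_n(2r)\approx r^n/n!$, is what legitimises the Fubini step.
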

\begin{proof}
For brevity we set $s=0$, $a_n^\nu:=a_n(0)$ and $a_n^\nu(t):=a_n^\nu(t;0)$. 
Since the system \eqref{eq:m1} is linear it is enough to solve the homogeneous 
deterministic
system for initial conditions $a^\nu_m=1$, $m\geq 1$, $a^\nu_n=0$, $n\neq m$, and
the non-homogeneous stochastic system with initial conditions $a_n^\nu=0$, $n\geq 0$.
Thus the solution of the homogeneous system is given by
\ba
a_n^\nu(t)&=e^{-\nu t}(J_{|n-m|}(2t)+(-1)^{m-1}J_{n+m}(2t)),\\
a_m^\nu(t)&=e^{-\nu t}(J_0(2t)+(-1)^{m-1}J_{2m}(2t)).
\ea
The proof is straightforward with help of the differentiation formulae 
(here and below, see \cite[Chapter 9]{AbramowitzS-84})
\ba
\label{eq:Jder}
2J'_n(t)&=J_{n-1}(t)-J_{n+1}(t),\quad n\geq 1,\\
J'_0(t)&=-J_1(t).
\ea
 The initial condition
is satisfied due to the relations $J_0(0)=1$, $J_n(0)=0$, $n\geq 1$.

To show that
$\int_0^t H^\nu_n(t-r)\,dL(r)$ solves the non-homogeneous equations we have
to be sure that the stochastic integral is well-defined for all $t\geq 0$.
First we note that the asymptotic expansion 
\ba
\label{eq:Jas0}
J_n(2r)&\approx \frac{r^n}{n!}, \quad r\to 0,\,n\geq 1,
\ea
implies that the integrand is a smooth and bounded function on $r\in[0,\infty)$ for
all $\nu\geq 0$. Consequently the stochastic integral is well-defined and we can easily 
calculate the characteristic function
\be
\E \exp\Big(i\la \int_0^t H^\nu_n(t-r)\,dL(r) \Big)=
\exp\Big(\int_0^t \Psi(\la H^\nu_n(r))\, dr \Big),
\quad \la\in\mathbb{R}.
\ee
For the next step of the proof we take use of the relations \eqref{eq:Jder} to show that for $n\geq 1$
\ba
\frac{d}{dr}H^\nu_n(r)&=- n\frac{J_n(2r)}{2r^2}e^{-\nu r}
+ n\frac{J_{n-1}(2r) -  J_{n+1}(2r) -n  J_{n}(2r)    }{2r}e^{-\nu r}\\
&=-\frac{J_{n-1}(2r)+J_{n-1}(2r)}{2s}e^{-\nu s}
+ n\frac{J_{n-1}(2r) -  J_{n+1}(2r) -\nu  J_{n}(2r)    }{2r}e^{-\nu r}\\
&=H^\nu_{n-1}(r)-H^\nu_{n+1}(r)-\nu H^\nu_n(r).
\ea
Thus we finish the proof with help of the intergation by parts. Indeed 
for $n\geq 1$ we have
\ba
\int_0^t& H^\nu_n(t-r)\,dL(r)=H^\nu_n(0)L(t)+
\int_0^t\Big[\int_0^u (H^\nu_n)'(t-r)\, dL(r)\Big]\, du\\
&=H^\nu_n(0)L(t)+\int_0^t (a^\nu_{n-1}(u)-a^\nu_{n+1}(u)-\nu a^\nu_n(u))\,du.
\ea
Taking into account the asymptotics \eqref{eq:Jas0} we note that $H^\nu_n(0)=1$ for
$n=1$ and $H^\nu_n(0)=0$ for $n\geq 2$, what finishes the proof.
\end{proof}

\section{Stationary distribution}

In this section we study the convergence of $a^\nu_n(t,s)$ to the stationary 
distribution. 

It is clear that due to the linearity of the system, the convergence to the
stationary regime depends on the asymptotic properties of the deterministic
solution corresponding to $L\equiv 0$.
In \cite{MattinglySVE-07}, the authors showed that 
for $(a_n^\nu)_{n\geq 1}\in l^2$, $\nu\geq 0$, the deterministic solution converges weakly to zero.
For bounded initial values of $a_n$, they gave an example of no convergence, 
namely the solution
\ba
&a_{2n-1}^0(t)=1,\\
&a_{2n}^0(t)=0,\quad n\geq 1,
\ea
which is a fixed point of the homogeneous system.
With help of the explicit formula for the solution, we can find another 
bounded initial condition, for which the corresponding solution weakly converges 
to zero. Indeed, for
$a_{2n-1}=0$, $a_{2n}=1$,  $n\geq 1$ we have:
\ba
&a^0_{2n-1}(t)=2\sum_{j=1}^{n-1} J_{2j-1}(2t)+ J_{2n-1}(2t),\\
&a^0_{2n}(t)=J_0(2t)+2\sum_{j=1}^{n-1} J_{2j}(2t)+  J_{2n}(2t).\\
\ea
The convergence follows from the asymptotic expansion
\ba
\label{eq:asinf}
J_n(2t)&\approx \sqrt{\frac{1}{\pi t}}\cos\Big(2t-\frac{n\pi}{2}-\frac{\pi}{4}\Big),
\quad t\to \infty.
\ea
The convergence is however not uniform over $n\geq 1$ since for any $t\geq 0$
\ba
2\sum_{j=1}^{n-1} J_{2j-1}(2t)&\to \int_0^{2t}J_0(r)\, dr
\ea
and
\ba
J_0(2t)+2\sum_{j=1}^{n-1} J_{2j}(2t)&\to 1,\quad n\to \infty.
\ea

To study the influence of the stochastic forcing, let us set $a_n^\nu(s):=0$ for 
$n\geq 0$, so that the deterministic part of the solution disappears. 
To obtain the invariant distribution, we consider the pull-back limit 
\be
\label{eq:s}
a^\nu_n=\lim_{s\to-\infty}\int_s^0 H^\nu_n(-r)\, dL_s(r)
\stackrel{d}{=}\int_0^{\infty} H^\nu_n(r)\, dL(r).
\ee
It is instructive to determine the Fourier transform of $a^\nu_n$ which can be found
as
\ba
F^\nu_n(\la)=\lim_{s\to -\infty} \E e^{i\la a^\nu_n(0,s)}
=\E \exp\Big( \int_0^\infty \Psi(\la H^\nu_n(r))\, dr  \Big),\quad \la\in\mathbb{R}.
\ea
First of all, we have to study the existence of the latter limit. 
Clearly, $a_n^\nu$ is a 
well-defined random variable if and only if for any finite $\la$
\ba
\int_0^\infty \Psi(\la H^\nu_n(r))\, dr<\infty.
\ea
In view of the asymptotics \eqref{eq:asinf} this is equivalent to the convergence
of the integral
\ba
\label{eq:c}
\int_1^\infty \Psi\Big(\frac{e^{-\nu r}\cos(2r)}{\sqrt{r}}\Big)\, dr<\infty,
\ea
which in turn depends on the asymptotics of $\Psi(r)$ as $r\to 0$.
We also notice that the condition \eqref{eq:c} does not depend on $n$, thus 
all random variables $a_n^\nu$ are either well-defined or not well-defined 
simultaneously.
In particular, for any $n,m\geq 1$ we can calculate the mutual Fourier transform
\be
\label{eq:F}
F_{n,m}^\nu(\la_n,\la_m):=\E e^{i(\la_n a^\nu_n+\la_m a^\nu_m)}
=\exp\Big(-\int_0^\infty \Psi(\la_n H^\nu_{n}(r)+\la_m H^\nu_{m}(r)   )\, dr \Big).
\ee
Clearly, similar formulae hold for any finite number of summands.

In the next section we study the invariant distributions $a^\nu_n$ in more detail
by discussing several special situations corresponding to different forms
of the cumulant $\Psi$. 

\section{Particular cases}

\subsection{Linear perturbation}
We start with the simplest deterministic perturbation $L(t)=\mu t$  
linear in time (corresponding to a constant forcing) 
with the cumulant $\Psi(\la)=i\la \mu$.
The Fourier transform of the stationary solution is found explicitly as
\ba
\label{eq:i1}
\E e^{i\la a_n^\nu}&=
\exp\Big(\int_0^\infty \Psi(\la H^\nu_n(r))\, dr\Big)\\
&=\exp\Big(i\la\mu \int_0^\infty H^\nu_n(r)\, dr\Big)\\
&=\exp\Big(i\la \mu  \int_0^\infty n\frac{J_n(2r)}{r}e^{-\nu r}\, dr\Big)\\
&=\exp\Big(i\la \mu n \Big(\frac{\nu}{2}+\sqrt{1+\frac{\nu^2}{4}}\Big)^{-n} \Big).
\ea 
This means that the stationary solution is a constant
\be
a_n^\nu=\mu n \Big(\frac{\nu}{2}+\sqrt{1+\frac{\nu^2}{4}}\Big)^{-n}
\ee
and as $\nu\to 0$ we obtain that
\be
a_n^\nu\approx \mu n\Big(1-\frac{n}{2}\nu+\dots\Big)\to a^0_n= \mu n.
\ee
On the other hand as $n\to\infty$, $a_n^\nu\to 0$ for $\nu>0$, whereas $a^0_n\to \infty$.

\subsection{Gaussian perturbation}
If the forcing is Gaussian, i.e.\ $\Psi(\la)=-\sigma^2\la^2/2$, it is clearly seen that
each $a^\nu_n$ is also a Gaussian random variable with the characteristic function
\ba
\label{eq:i2}
\E e^{i\la a^\nu_n}&=
\exp\Big(\int_0^\infty \Psi(\la H^\nu_n(r))\, dr\Big)\\
&=\exp\Big(-\la^2\frac{\sigma^2}{2} 
\int_0^\infty (H_n^\nu(r))^2\, dr\Big)\\
&=\exp\Big(-\la^2\frac{\sigma^2}{2} 
\int_0^\infty  n^2\frac{J_n^2(2r)}{r^2}e^{-2\nu r}\, dr\Big)\\
&=\exp\Big(- \frac{\la^2\sigma^2}{2\sqrt{\pi}}\frac{\Gamma(n-\frac{1}{2})}{\Gamma(n)}
\nu^{1-2n}{}_3F_2\Big(n-\frac{1}{2},n,n+\frac{1}{2};n+1,2n+1;-\frac{4}{\nu^2}\Big)\Big).
\ea
In the limit as $\nu\to 0$ we recover
\ba
\E e^{i\la a^\nu_n}\approx
\exp\Big(- \frac{\lambda^2\sigma^2}{2}\Big(\frac{8n^2}{\pi(4n^2-1)}-n\nu +\dots \Big)\Big)
\to
\E e^{i\la a^0_n}=\exp\Big(-\frac{\lambda^2\sigma^2}{\pi}\Big(1- \frac{1}{4n^2}\Big)^{-1}
\Big)
\ea
for all $n\geq 1$.
In particular,
\be
\E (a^0_n)^2 =\frac{2}{\pi} \sigma^2\Big(1- \frac{1}{4n^2}\Big)^{-1}\to 
\frac{2}{\pi} \sigma^2,\quad n\to \infty,
\ee
as it was discovered in \cite{MattinglySVE-07}. Taking into account Eq.\eqref{eq:F}, 
we determine the covariances,
\ba
\label{eq:i3}
\E a_m^\nu a_n^\nu
&=-\frac{\partial^2}{\partial \la_m\partial \la_n}F_{n,m}^\nu(\la_n,\la_m)\Big|_{\la_n=\la_m=0}
=\sigma^2\int_0^\infty H^\nu_m(r)H^\nu_n(r)\,dr\\
&=\sigma^2(2\nu)^{1-m-n}
\frac{(m+n-2)!}{(m-1)!(n-1)!}\\
&\times{}_4F_3\Big(\frac{m+n-1}{2},\frac{m+n}{2},\frac{m+n+1}{2},\frac{m+n}{2}+1;
m+1,n+1,m+n+1;-\frac{4}{\nu^2}\Big),\\
\E a_m^0a_n^0&=\sigma^2\frac{2}{\pi} \cos\Big(\pi\frac{m-n}{2}\Big)
\Big[\frac{1}{(m+n)^2-1}-\frac{1}{(m-n)^2-1}\Big],
\ea
where the latter formula consides with the result from \cite{MattinglySVE-07}.

\subsection{L\'evy perturbations with finite second moment}
Further, we note that the covariance structure of the stationary distribution 
is the same for all L\'evy forcings with the finite second moment. Indeed,
assume that $\E L(t)^2<\infty$ which is equivalent to the condition
that $\Psi(\cdot)\in C^2(\mathbb{R})$.
Differentiating the Fourier transform \eqref{eq:F} at zero, we determine
the mean value and the covariances of the stationary solutions $a_n^\nu$:
\ba
\E a_n^\nu &=\frac{d}{d\la}F_n^\nu(\la)\Big|_{\la=0}
 =-i\Psi'(0)\int_0^\infty H^\nu_n(r)\,dr,\\
\E (a_n^\nu)^2&=-\frac{d^2}{d\la^2}F_n^\nu(\la)\Big|_{\la=0}
=-\Psi''(0)\int_0^\infty (H^\nu_n(r))^2\,dr 
-\Big(\Psi'(0)\int_0^\infty H_n^\nu(r)\, dr\Big)^2  ,\\
\E a_m^\nu a_n^\nu
&=-\frac{\partial^2}{\partial \la_m\partial \la_n}F_{n,m}^\nu(\la_n,\la_m)\Big|_{\la_n=\la_m=0}\\
&=-\Psi''(0)\int_0^\infty H^\nu_m(r)H^\nu_n(r)\,dr-
(\Psi'(0))^2\int_0^\infty H^\nu_m(r)\,dr\int_0^\infty H^\nu_n(r)\,dr.
\ea
The closed form for the above integrals was already given in \eqref{eq:i1}, \eqref{eq:i2} 
and \eqref{eq:i3}. 
Thus our analysis shows that
the covariance structure of the stationary laws does not depend
on Gaussianity itself, and is the same for all L\'evy forcings with
equal first and second moments. 

\subsection{L\'evy flights}
Finally, for the L\'evy flights forcing with $\Psi(\la)=-\sigma^\alpha|\la|^\alpha$, 
$\alpha\in (0,2)$, $\sigma>0$, the stationary probability distribution 
exists for all $n\geq 1$, $\alpha\in (0,2)$, and $\nu> 0$ and for all 
$\alpha\in (\frac{2}{3},2)$ and $\nu= 0$, and is also $\alpha$--stable with the
characteristic function
\be
\E e^{i\la a^\nu_n}=
\exp\Big(-|\la|^\alpha\sigma^\alpha n^\alpha 
\int_0^\infty \frac{|J_n(2r)|^\alpha}{r^\alpha}e^{-\alpha\nu r}\, dr\Big).
\ee
In the zero viscosity case $\nu=0$, we observe a strange transition at the critical value 
$\alpha=2/3$ of the stability index. Formally it comes from the integrability condition
of the Bessel functions.
Indeed, in view of the asymptotics \eqref{eq:asinf} the limits
\be
\lim_{t\to\infty}\int_0^t \frac{|J_n(2r)|^\alpha}{r^\alpha}\, dr,
\lim_{t\to\infty}\int_1^t \frac{|\cos(t)|^\alpha}{r^{3\alpha/2}}\, dr
\quad \text{and}\quad 
\lim_{t\to\infty}\int_1^t \frac{1}{r^{3\alpha/2}}\, dr
\ee
either exist or not exist simultaneously, and the latter limit is finite if only if
$\alpha>2/3$. Since the stability index determines the weight of the tails
of the L\'evy flights, $\P(|L(t)|>u)=\mathcal{O}(u^{-\alpha})$, $u\to\infty$, we can conclude that
in the model under consideration, too heavy tails and big jumps of the random forcing 
lead to anomalous dispersion of the energy in space and do not allow to build up a 
stationary \emph{probability} distribution.
On the other hand as it will be shown in the next section, the critical value 
$\alpha=2/3$ has it origin not in the nature of the random forcing but rather in the
properties of the space-discrete deterministic equation. 

\section{Continuous limit}

In this section we discuss an interesting effect related to the 
continuous limit of Eq.\eqref{eq:m1}.
For any real function $f$ we introduce a discrete symmetric gradient at $x$
as
\be
\nabla_{\!h} f(x):=\frac{f(x+h)-f(x-h)}{2h},\quad h>0.
\ee
Denote also $a^\nu(t,n;s):=a^\nu_n(t;s)$. With this notation under condition $h=1$,
the system \eqref{eq:m1} takes the form
\ba
\label{eq:m2}
\begin{cases}
a^\nu(t,0;s)=0,\\ 
\displaystyle
a^\nu(t,h;s)=a(h;s)+\int_s^t( -2\nabla_{\!h} a^\nu(u,h;s)  -\nu h a^\nu(u,h;s)) \, du
+ \tfrac{1}{h}L_s(th),\\
\displaystyle
a^\nu(t,nh;s)=a(nh;s)+\int_s^t( -2\nabla_{\!h} a^\nu(u,nh;s)
-{\nu}{h} a^\nu(u,nh;s))\, du,
\quad n\geq 2,\\
\end{cases}
\ea
$\nabla_{\!h}$ denoting the discrete gradient w.r.t.\ the second variable of $a^\nu$.
We omit the discussion on the limiting behaviour of the deterministic
solution assuming that the initial values of $a^\nu$ are identically
zero.  According to \eqref{eq:s}, we find the stationary solution of
the perturbed system as
\be
a^\nu_h (n)=\lim_{s\to-\infty} \frac{n}{h}\int_{-s}^0 \frac{J_n(-2r)}{-r}e^{\nu h r}\, dL_s(hr) 
\stackrel{d}{=}
\frac{n}{h}\int_0^{\infty} \frac{J_n(2r)}{r}e^{-\nu h r}\, dL(hr), 
\label{eq:stin}
\ee
provided the latter stochastic integral exists. To study the existence, we use
the change of variables formula 
\be
\int_0^\infty f(r)\, dL(hr)\stackrel{d}{=}\int_0^\infty f(\tfrac{r}{h})\, dL(r)
\ee
for $h>0$ and a continuous function $f$ to obtain that
\ba
\label{eq:a}
a^\nu_h(n)&\stackrel{d}{=}\frac{n}{h}\int_0^\infty \frac{J_n(2r)}{r}e^{-\nu h r}\, dL(hr)
\stackrel{d}{=}\frac{n}{h}\int_0^\infty 
\frac{J_n(\frac{2r}{h})}{\frac{r}{h}}e^{-\nu r}\, dL(r).
\ea
It is clearly seen, that the stochastic integral on the right-hand side of 
\eqref{eq:a} exists for all $n\geq 1$ and $h>0$ if and only if the integrability condition
\eqref{eq:c} holds (recall constraints on $\alpha$ in the L\'evy flights case). 
Further, let $n$ and $h$ be such that $nh=x>0$. Then with help of \eqref{eq:a} we can
pass to the limit
\ba
\label{eq:f}
a^\nu_h(x)&\stackrel{d}{=}
\int_0^\infty \frac{x}{h}\frac{J_{\frac{x}{h}}( \frac{x}{h}\cdot\frac{2r}{x})}
{r}e^{-\nu r}\, dL(r)
\stackrel{h\to 0}{\to} \int_0^\infty \frac{\delta( \frac{2r}{x}-1)}
{r}e^{-\nu r}\, dL(r)=\frac{2}{x}e^{-\nu\frac{x}{2}}\dot L\Big(\frac{x}{2}\Big)
=:a^\nu(x).
\ea
To derive the latter limit we also used that $(\mu J_\mu(\mu(x+1)))_{\mu>0}$, is 
an approximating sequence for the Dirac 
$\delta$--function (see \cite{Lamborn-69a}), i.e.\ that for any appropriate test function $f$ we have
\be
\lim_{\mu\to\infty}\int_{0}^\infty \mu J_\mu(\mu(x+1)) f(x)\, dx = f(0).
\ee
This approximating system which appears here naturally as a generic 
property of the model,
also proved to be useful in treating relativistic plasma 
dispersion in the limit of zero magnetic field, see \cite{Lamborn-69}.
We emphasise that the limit \eqref{eq:f} is pointwise for $x>0$ and $h\to 0$,
and our argument has sense provided the condition \eqref{eq:c} holds and the
random variables $a_h^\nu(x)$ are well-defined.

As a partial differential equation analogue of \eqref{eq:m2}, let us consider the linear transport equation
\be
\label{eq:m3}
\begin{cases}
b_t^{\nu,\e}(t,x;s)+2 \nabla b^{\nu,\e}(t,x;s)+\nu b^{\nu,\e}(t,x;s)
=\dot L_s(t)\delta(x-\e),\quad t>s, x>0,\\
b^{\nu,\e}(s,x;s)=\phi(x),\quad x\geq 0, \quad \phi(0)=0, \\
b^{\nu,\e}(t,0;s)=0,\quad t\geq s,
\end{cases}
\ee
where $\nu\geq 0$, $\e>0$, $\delta$ denotes the Dirac function, and $\phi$
is the initial condition. 
Such a setting allows to
capture the random perturbation as a non-homogeneity in the linear transport equation.

Equation \eqref{eq:m3} can be easily solved explicitly.
Its general solution is given by
\be
b^{\nu,\e}(t,x;s)=\Phi(x-2t)+\int_s^t e^{-\nu(t-r)} 
\delta(2r-2t+x-\e)  \, dL_s(r)
\ee
for an arbitrary $\Phi(\cdot)$.
The first boundary condition yields:
\be
\Phi(u)=\phi(u+2s), \quad u\geq -2s.
\ee
The second boundary condition yields:
\ba
\Phi(-2t)&=-\int_s^t e^{-\nu(t-r)}\delta(2r-2t-\e)\, dL_s(r),
\quad t\geq s,\\
\Phi(u)&=-\int_s^{-\frac{u}{2}} e^{\nu(\frac{u}{2}+r)} \delta(2r +u-\e) 
 \, dL_s(r),\quad u\leq -2s.\\
\ea
Consequently,
\ba
b^{\nu,\e}(t,x;s)=\phi(x-2t+2s)+\int_s^t e^{-\nu(t-r)}\delta(2r-2t+x-\e)\, dL_s(r),
\quad s\leq t\leq s+x/2,
\ea
and
\ba
b^{\nu,\e}(t,x;s)&= \int_s^t e^{-\nu(t-r)}\delta(2r-2t+x)\, dL(r-s)-
\int_s^{t-\frac{x}{2}} e^{-\nu(t-r-\frac{x}{2})}\delta(2r-2t+x-\e)\, dL_s(r),
\quad t\geq s+x/2,
\ea
In particular, for $t=0$, fixed $x>0$ and $s\leq -x/2$ we have
\ba
b^{\nu,\e}(t,x;s)&=\int_{s}^0 e^{\nu r}\delta(2r+x-\e)\, dL_s(r)-
\int_{s}^{-\frac{x}{2}} e^{\nu(r+\frac{x}{2})}\delta(2r+x-\e)\, dL_s(r)\\
&=e^{-\nu\frac{x-\e}{2}}\dot L(-\tfrac{x-\e}{2})\stackrel{\e\to 0}{\to}
e^{-\nu\frac{x}{2}}\dot L(-\tfrac{x}{2})=:b^\nu(x).
\ea
This yields the stationary solution for the limiting equation for $\nu\geq 0$:
\be
\label{eq:b}
b^\nu(x)\stackrel{d}{=}e^{-\nu\frac{x}{2}}\dot L\Big(\frac{x}{2}\Big),\quad x> 0.
\ee
It is important to notice that in this case we have no constraints on the forcing $L$.

As we see, the stationary distributions of $a^\nu(x)$ and $b^\nu(x)$ are
different due to the pre-factor $\frac{2}{x}$. However, for $\nu>0$ the influence 
of this pre-exponential term
is not crucial at least for large values of $x$.
Another difference 
concerns the fact that the asymptotic solution $a^\nu$ can be obtained only
for a class of forcings satisfying condition \eqref{eq:c}, whereas the solution $b^\nu$ is
defined for all L\'evy drivers $L$.
The reason for these differences can be the inaccuracy of the approximation of
the discrete gradient $\nabla_{\!h}$ by
gradient operator  
which neglects higher order derivatives. Due to the presence of higher derivatives in
$\nabla_{\!h}$, oscillating solutions
of the deterministic system arise in a discrete system.  
These lead to complex interference
between the modes in the discrete model. This interference
makes $a^\nu$ to be typically smaller in amplitude than the solution of the continuous
model where no such complex interferences arise. Moreover, the interference
between many modes in the case of heavy-tailed forcings leads
to the divergences and implies non-existence of the corresponding
stationary probability distributions. Technically, the reason for this can be explained as follows.
The random perturbation comes in as a stochastic 
integral of an oscillating kernel $J_n(2r)/r$ which under certain conditions
can be approximated by the $\delta$--function. However the asymptotics of this 
kernel at infinity imposes restrictions on the space of admissible test functions,
as it happens in the case of zero viscosity and
L\'evy perturbations with heavy tails and, i.e.\ when  $\nu=0$ and
$\P(|L(t)|>u)=\mathcal{O}(u^{-\alpha})$, $u\to\infty$, $\alpha\in (0,2/3]$. 

\section{Conclusions} 

In this paper we studied a space-discrete transport equation subject to additive
L\'evy forcing. We derived an explicit solution of this equation in terms of the
Bessel functions of the first kind, and showed that the random forcing comes into 
solution via a stochastic integral of a certain oscillating kernel.
We generalised the results by \cite{MattinglySVE-07} on the covariance 
structure of the stationary distribution from the purely Gaussian to the general L\'evy
case, and in particular established that the covariances are identical for all
L\'evy forcings with equal first and second moments. In case of zero viscosity and 
non-Gaussian L\'evy flights forcing, we
found that the stationary probability distribution 
exists only for stability indices $\alpha\in (2/3,2)$. The explanation to this phenomenon
is given by comparison of a space-time limit of the space-discrete transport equation
with its partial differential equation analogue. An interesting approximating sequence
for a Dirac $\delta$--function appears as a natural part of the space-discrete system.
In particular, this approximating sequence allows only for test functions which have tails
lighter than $u^{-{2/3}}$. The discrepancy between the space-discrete and the PDE models 
comes from taking into account higher derivatives while considering a space-discrete 
gradient operator.

\section{Acknowledgements} The authors acknowledge financial support by 
DFG within
SFB 555 collaborative research project.

\bibliography{biblio-new}
\bibliographystyle{alpha}

\end{document}